\newcommand{\method}{\texttt{FedAU$^2$}}
\newtheorem{theorem}{Theorem}
\newtheorem{corollary}{Corollary}
\title{FedAU2: Attribute Unlearning for User-Level Federated Recommender Systems with Adaptive and Robust Adversarial Training}
\author{
Yuyuan Li\textsuperscript{\rm 1},
Junjie Fang\textsuperscript{\rm 1},
Fengyuan Yu\textsuperscript{\rm 2},
Xichun Sheng\textsuperscript{\rm 3},
Tianyu Du\textsuperscript{\rm 2},
Xuyang Teng\textsuperscript{\rm 1},
Shaowei Jiang\textsuperscript{\rm 1},
Linbo Jiang\textsuperscript{\rm 4},
Jianan Lin\textsuperscript{\rm 4},
Chaochao Chen\textsuperscript{\rm 2}\thanks{Corresponding author: zjuccc@zju.edu.cn}
}
\begin{document}

\maketitle

\begin{abstract}
    Federated Recommender Systems (FedRecs) leverage federated learning to protect user privacy by retaining data locally. 
    However, user embeddings in FedRecs often encode sensitive attribute information, rendering them vulnerable to attribute inference attacks. 
    Attribute unlearning has emerged as a promising approach to mitigate this issue. 
    In this paper, we focus on user-level FedRecs, which is a more practical yet challenging setting compared to group-level FedRecs.
    Adversarial training emerges as the most feasible approach within this context.
    We identify two key challenges in implementing adversarial training-based attribute unlearning for user-level FedRecs:
    i) mitigating training instability caused by user data heterogeneity, and ii) preventing attribute information leakage through gradients.
    To address these challenges, we propose FedAU2, an attribute unlearning method for user-level FedRecs.
    For \textbf{CH1}, we propose an adaptive adversarial training strategy, where the training dynamics are adjusted in response to local optimization behavior.
    For \textbf{CH2}, we propose a dual-stochastic variational autoencoder to perturb the adversarial model, effectively preventing gradient-based information leakage.
    Extensive experiments on three real-world datasets demonstrate that our proposed FedAU2 achieves superior performance in unlearning effectiveness and recommendation performance compared to existing baselines.
    \end{abstract}

\section{Introduction}\label{sec:intro}
Recommender Systems (RSs) are integral to modern online platforms, delivering personalized recommendations based on user preferences~\cite{hasan2024based, lu2025dmmd4sr,pbat2023su,feng2024fine}.
Traditional RSs are typically designed in centralized settings, requiring users to provide all raw interaction data, such as clicks and purchases. 
However, this centralized approach raises significant concerns regarding user privacy and potential data misuse~\cite{qu2024towards, zhou2023fastpillars}.
To address these concerns, Federated Learning (FL)~\cite{mcmahan2017communication,liu2024fedbcgd, liu2024qoe,zhang2025fedfact,chen2024integration} has emerged as a privacy-preserving paradigm for training recommendation models, which are known as Federated RSs (FedRecs)~\cite{sun2022survey}.
In FL, users' raw data remains on their local devices, and a global model is collaboratively trained through local updates and aggregation.

Recent data privacy regulations, such as the GDPR~\cite{protection2018general} and CCPA~\cite{illman2019california}, emphasize the \textit{right to be forgotten}, granting users the ability to withdraw personal data, including their influence on models and sensitive information~\cite{li2024survey,feng2025controllable}.
RSs represent critical application scenarios that heavily rely on personal data and often embed users' attribute information, rendering them susceptible to attribute inference attacks~\cite{chen2024post}.
Attribute unlearning has emerged as a promising approach to eliminate users' attribute information from models~\cite{feng2025plug, yu2025lego}.
However, while FedRecs preserve the privacy of local data, it does not provide mechanisms to remove users' attribute information~\cite{hu2024user}, thereby failing to satisfy the right to be forgotten. 
Therefore, enabling attribute unlearning in FedRecs is imperative to uphold users' privacy.

FedRecs can be classified into two categories~\cite{sun2022survey}: user-level and group-level.
In user-level FedRecs, each client corresponds to an individual user (e.g., a personal smartphone), while in group-level FedRecs, each client aggregates data from multiple users (e.g., within an organization).
Attribute unlearning methods (e.g., distribution alignment) developed for group-level settings~\cite{hu2024user, wu2025aegis, lu2025dammfnd, feng2025survey} typically aggregate data from multiple users, rendering them inapplicable to user-level FedRecs, where data remains strictly isolated on individual devices.

In this paper, we focus on attribute unlearning in user-level FedRecs, which is more generalized and more challenging~\cite{sun2022survey}. 
As distribution alignment~\cite{wu2025aegis} is inapplicable in this setting, adversarial training~\cite{feng2025raid, wang2025robin} emerges as the most feasible approach. Nevertheless, we identify two key challenges in its implementation:
\textbf{CH1}: How to make adversarial training stable in federated settings, i.e., the cold start problem~\cite{zhang2024bidirectional}.
Specifically, traditional adversarial learning methods require pre-training to stabilize the training process~\cite{ganhor2022unlearning}
However, in FedRecs, due to the heterogeneity of user data and the randomness of sampling~\cite{zhang2024federated}, pre-training may lead to unstable training processes.
\textbf{CH2}: How to prevent attribute information leakage during unlearning in FedRecs.
Recent studies have demonstrated that the model gradient transmitted during server-client communication can expose users' raw data through reconstruction attacks~\cite{zhao2020idlg, zhu2019deep},
with attribute labels being particularly vulnerable to such leakage.

To address these challenges, we propose \method{}, an \textbf{A}ttribute \textbf{U}nlearning method for \textbf{U}ser-level \textbf{Fed}Recs.
Specifically, we perform adversarial learning locally on each client and aggregate model updates on the server, thereby eliminating the reliance on group-level attribute information.
For \textbf{CH1}, 
we propose a decentralized and adaptive training strategy that operates at the user level to avoid the global pre-training. 
Specifically, each user autonomously adjusts their adversarial training in response to their own adversarial prediction outcome, enabling fine-grained control and enhanced adaptation.
For \textbf{CH2}, we propose a Dual-Stochastic Variational AutoEncoder (DSVAE), which is integrated into the adversarial model.
Our theoretical analysis reveals that the dual stochastic design helps perturb the adversarial model to enhance robustness against gradient-based reconstruction attacks.
We summarize the main contributions of this paper as follows:
\begin{itemize}[leftmargin=*]\setlength{\itemsep}{-\itemsep}
    \item We propose \method{}, a novel attribute unlearning method for user-level FedRecs.
    We identify two key challenges: i) avoiding adversarial training instability in federated settings, and ii) preventing gradient-based attribute leakage.
    \item For \textbf{CH1}, we propose an adaptive adversarial training strategy that dynamically adjusts each user's training process based on their local updates. This design
    enhances both the stability and efficiency of adversarial optimization in an adaptive manner.
    \item For \textbf{CH2}, we integrate a DSVAE into the adversarial model. Theoretical analysis reveals that our proposed DSVAE perturbs the parameters to effectively prevent attribution information leakion in FedRecs.
    \item We conduct extensive experiments on three real-world datasets across representative recommendation models. The results demonstrate that our proposed \method{} significantly outperforms existing baselines.
\end{itemize}

\section{Related Work}\label{sec:relate}
\subsection{Federated Recommendation Systems}
FedRecs leverage federated learning to enable collaborative model training without sharing raw user data in RSs.
Based on the granularity of user participation~\cite{sun2022survey}, FedRecs can be categorized into user-level (where each device acts as a client) and group-level (where clients represent user groups or a data silo) settings.
Most FedRecs are developed based on the user-level setting~\cite{chai2020secure, perifanis2022federated}
Latest user-level FedRecs also explore personalized methods 
to better capture user-specific preferences under heterogeneous data~\cite{li2023federated}.

The key distinction between these two settings lies in data and client heterogeneity~\cite{sun2022survey}. 
Compared to group-level settings, user-level FedRecs face extremely sparse and non-IID data.
This introduces substantial challenges for unlearning methods that depend on cross-user information~\cite{hu2024user, wu2025aegis, ENCODER, FineCIR}.
In this work, we focus on the user-level setting, which remains the most prevalent and technically challenging scenario for federated recommendation.

\subsection{Recommendation Attribute Unlearning}
RSs can implicitly encode sensitive user attributes (e.g., gender and age) into user embeddings, making them vulnerable to attribute inference attacks~\cite{feng2025raid, liu2025fine}.
Attribute unlearning~\cite{li2023making} has emerged 
to mitigate this issue.
In centralized RSs, researchers utilize adversarial learning~\cite{ganhor2022unlearning} and post-training tuning~\cite{li2023making} to achieve attribute unlearning.

FedRecs also face risks of attribute leakage because federated learning cannot prevent attribute inference~\cite{hu2024user}.
Traditional collaborative filtering methods
and latest personalized FedRecs
both encode user attributes into embeddings, increasing privacy risks.
To mitigate such risks, Hu et al. propose a user-consented approach
for attribute unlearning~\cite{hu2024user}.
Aegis~\cite{wu2025aegis} extends post-training methods to the federated setting.
However, these methods 
rely on group-level user data, which is infeasible in user-level FedRecs.
For user-level FedRecs, Zhang et al. propose a
Differential Privacy (DP) framework~\cite{zhang2023comprehensive}.
However, DP-based methods typically lack attribute specificity (i.e., cannot selectively unlearn a particular attribute) and often significantly degrade recommendation performance. 
These gaps highlight the need for an attribute unlearning approach for user-level FedRecs that can effectively mitigate attribute leakage while preserving recommendation performance.

\begin{figure*}[t]
    \centering
    \includegraphics[width=\linewidth]{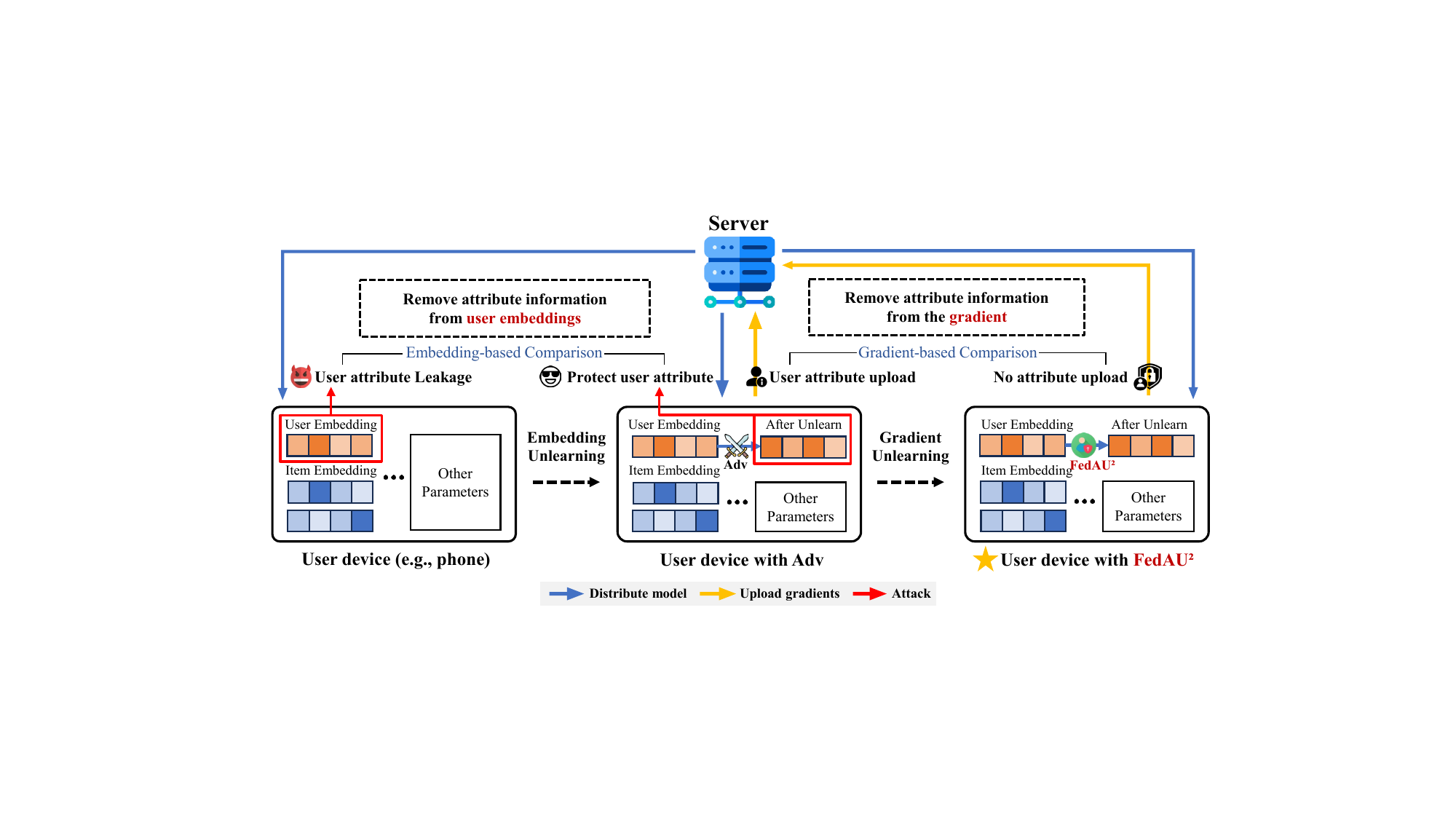}
    \caption{
        Comparison of three user device types in federated recommendation. 
        i) on the left, standard clients expose attribute information in embeddings; 
        ii) in the middle, adversarial clients can prevent embedding leakage but still suffer from gradient leakage; 
        and iii) on the right, our proposed \method{} eliminates information leakage from both embeddings and gradients.
    }    
    \label{fig:fedau2_story}
\end{figure*}

\subsection{Gradient-based Reconstruction Attacks}
Gradient-based reconstruction attacks (e.g., DLG) can reconstruct training data by analyzing raw gradients~\cite{zhu2019deep, duada25su}.
Subsequently, iDLG~\cite{zhao2020idlg} leverages analytical derivation to directly obtain real labels, making the attack more efficient and reliable. 

Current defenses 
fall into two categories: feature-side and label-side.
Feature-side methods protect input data~\cite{huang2020instahide, sun2021soteria},
but do not secure output labels, which are crucial in user-level FedRecs. 
In contrast, label-side methods protect label-related privacy at the output layer.
Label-DP noise~\cite{ghazi2024labeldp} and soft labels~\cite{struppek2023careful, he2024labobf} trade privacy for accuracy, and even hinder convergence in user-level FedRecs~\cite{wang2024horizontal, OFFSET, pillarhist}.
None of these methods scales well to user-level FedRecs.

\section{Preliminaries}\label{sec:pre}
\subsection{User-level Federated Recommendation}
Let \( \mathcal{U} \) and \( \mathcal{I} \) denote the sets of users and items, respectively.  
Each user \( u \in \mathcal{U} \) holds a private interaction vector \( \mathbf{x}_u \in \{0,1\}^{|\mathcal{I}|} \), where \( \mathbf{x}_{ui} \in \{0,1\} \) indicates whether user \( u \) has interacted with item \( i \in \mathcal{I} \).
The goal of FedRecs is to learn a scoring function \( f_\Theta : \mathcal{U} \times \mathcal{I} \rightarrow \mathbb{R} \),  
where \( f_\Theta(u, i) \) denotes the predicted score between user \( u \) and item \( i \).
To extract features of users and items, the system typically encodes users and items via \( \mathbf{em}_u = \phi_u(u) \) and \( \mathbf{em}_i = \phi_i(i) \),  
and computes the matching score between them using a scoring function \( s_\psi(\cdot) \). 
The recommendation objective can be formalized as:
\begin{equation}
    \min_{\Theta} \mathcal{L}_{\text{rec}} = \sum_{u \in \mathcal{U}} \sum_{i \in \mathcal{I}} \ell\left( 
     s_\psi(\mathbf{em}_u, \mathbf{em}_i) ,\ \mathbf{x}_{ui} \right),
\end{equation}
where \( \ell(\cdot) \) is a loss function (e.g., BPR), and  
\( \Theta = \phi_u \cup \phi_i \cup \psi\).
In FedRecs, user-side parameters \( \phi_u \) are locally updated and remain private,  
whereas item-side parameters \( \phi_i \) and model parameters \( \psi \) are uploaded and aggregated on the server.

\subsection{Adversarial Training}
Let \( \mathbf{em}_u \in \mathbb{R}^d \) denote the embedding of user \( u \), and let \( y \in \mathcal{Y} \) be a protected attribute (e.g., gender or age).  
We introduce an adversarial network \( h_\omega(\cdot) \), which takes \( \mathbf{em}_u \) as input and attempts to predict the protected attribute \( y \).  
The model learns to preserve recommendation quality while removing sensitive information from \( \mathbf{em}_u \), leading to the following min-max objective:
\begin{equation}
    \min_{\Theta} \max_\omega \mathcal{L}_{\text{rec}} - \lambda \cdot \sum_{u \in \mathcal{U}} \mathcal{L}_{\text{adv}}(\mathbf{em}_u, y),
\end{equation}
where \( \mathcal{L}_{\text{adv}} \) is the adversarial loss, implemented as a classification loss using cross-entropy:
\[
    \mathcal{L}_{\text{adv}}(\mathbf{em}_u, y) = \text{CE}(h_\omega(\mathbf{em}_u), y),
\]

To solve this optimization problem, we adopt a Gradient Reversal Layer (GRL)~\cite{ganin2015unsupervised} between \( \mathbf{em}_u \) and \( h_\omega \).  
GRL preserves the forward pass and reverses the backward gradient.  
Converting the min-max into a minimization.:
\begin{equation}
    \min_{\Theta, \omega} \mathcal{L}_{\text{rec}} + \lambda \cdot \sum_{u \in \mathcal{U}} \text{CE}(h_\omega(\text{GRL}(\mathbf{em}_u)), y).
    \label{eq:adv_training}
\end{equation}

\subsection{Reconstructing Data via DLG}

DLG reconstructs private user data by optimizing dummy inputs and labels to match observed gradients.  
Given the observed gradient \( \nabla W = \nabla_W \mathcal{L}(h_\omega(\mathbf{em}), y) \),  
the attacker initializes dummy variables \( \mathbf{em}', y' \sim \mathcal{N}(0, 1) \),  
and iteratively updates them to minimize the distance between dummy and real gradients:
\begin{equation}
(\mathbf{em}^*, y^*) = \arg\min_{\mathbf{em}', y'} \left\| \nabla W' - \nabla W \right\|^2,
\label{eq:dlg_optimization}
\end{equation}
where \( \nabla W' = \nabla_W \mathcal{L}(h_\omega(\mathbf{em}'), y') \) is the gradient computed from dummy data.  
This optimization exploits the differentiability of the gradient distance w.r.t. \( \mathbf{em}' \) and \( y' \).  
By minimizing this distance, the dummy data \((\mathbf{em}', y')\) gradually approximates the true training sample \((\mathbf{em}, y)\).

\section{Methodology}
\label{sec:methodology}

\subsection{Overview of FedAU2}

To achieve user-level attribute unlearning in FedRecs, we employ the adversarial training approach.
However, the direct application of adversarial training introduces two key challenges (\textbf{CH1}: stable training and \textbf{CH2}: privacy preservation).
To tackle these challenges, we propose \method{}, an adaptive and robust user-level attribute unlearning method for FedRecs.

\method{} demonstrates a significant advantage over both standard FL clients and the conventional adversarial training approach.
As shown in Figure~\ref{fig:fedau2_story}, i) the embeddings in a \textit{standard FL client} may leak sensitive attribute information; 
ii) a client equipped with a \textit{conventional adversarial network} mitigates embedding leakage but remains vulnerable to gradient-based reconstruction attacks; 
and iii) \method{} effectively prevents both embedding and gradient leakage, thereby offering enhanced privacy protection.

Specifically, \method{} integrates two synergistic components into adversarial training to enable adaptive and robust attribute unlearning, 
i.e., Selective Unlearning Trigger (SUT) for stable training 
and DSVAE to preserve gradient privacy. 
Figure~\ref{fig:fedau2_overview} illustrates the 
workflow of \method{}. 

\subsection{Selective Unlearning Trigger}
Adversarial training helps mitigate privacy leakage but often introduces instability, i.e., the cold start problem of adversarial models (\textbf{CH1}).
Moreover,
global pretraining strategies~\cite{ganhor2022unlearning} prove inadequate for FedRecs. 
The heterogeneity of user data and the randomness of client sampling result in uneven user participation, which in turn limits the generalization of the adversarial model in federated learning~\cite{yang2024understanding}.

The instability of adversarial training primarily stems from the trade-off governed by the perturbation budget~\cite{andriushchenko2020understanding}. 
There are two extreme cases in practice:
i) \textit{excessive perturbation budget:} aggressive perturbations obscure task-relevant features, rendering the adversarial loss uninformative and driving training into noisy updates; and
ii) \textit{insufficient perturbation budget:} mild perturbations are unable to erase the sensitive attributes derived from standard training, leaving privacy-related information embedded in the learned representations.

\begin{figure}[t]
    \centering
    \includegraphics[width=\linewidth]{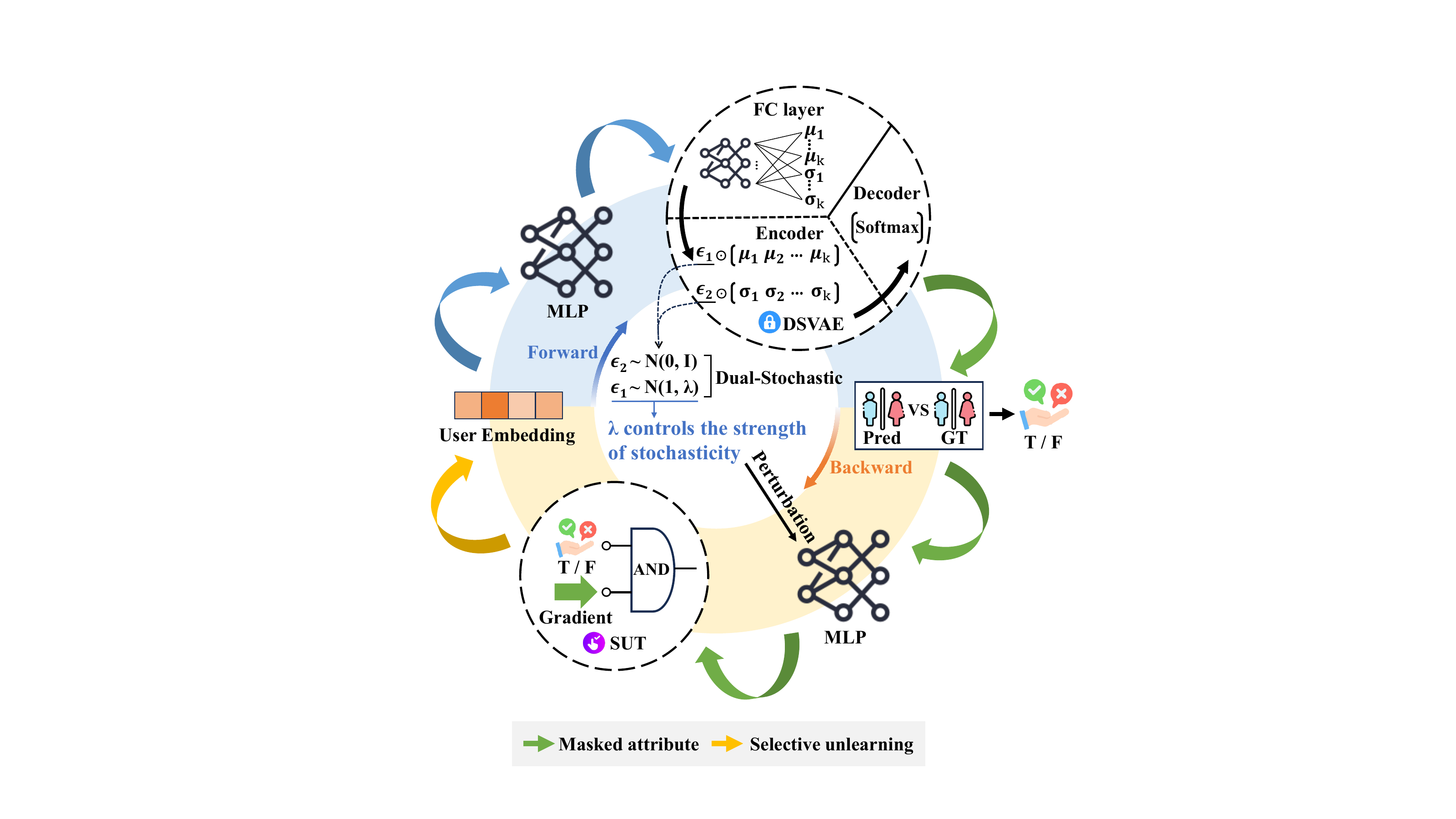}
    \caption{Workflow of \method{}. 
    During the forward pass, DSVAE injects dual-stochasticity
    , effectively masking attribute information embedded in the gradients. 
    During the backward pass, SUT dynamically adjusts the perturbation budget based on the prediction outcomes, enabling stable adversarial training.}
    \label{fig:fedau2_overview}
\end{figure}

To balance the trade-off in perturbation budget, we propose a SUT for user-level FedRecs, which enables efficient and adaptive control of the perturbation budget at the individual user level.  
Motivated by the
adaptive perturbation control at the sample level~\cite{balaji2019instance}, we incorporate this adaptive design into our adversarial training objective for user-level FedRecs.  
Specifically, the per-user training objective is formulated as:
\begin{equation}
    \min_{\theta, \omega} \; \mathcal{L}^{(u)}_{\text{rec}} 
    + \lambda \cdot \mathcal{L}_{\text{CE}}(h_\omega(\text{GRL}(\mathbf{em}_u; \epsilon_u)), y_u),
    \label{eq:tradeoff-abs}
\end{equation}
where \( \epsilon_u \) determines the strength of adversarial unlearning and is dynamically adjusted based on the estimated distance to the decision boundary:
\begin{equation}
    \epsilon_u = \frac{\tau}{\left\| \nabla_{\mathbf{em}_u} \mathcal{L}(h_\omega(\mathbf{em}_u), y_u) \right\|_2}, 
    \label{eq:dynamic-budget}
\end{equation}
where \( \tau \) is a global scaling factor. 
\( \epsilon_u \) serves as a local margin estimator between the user embedding \( \mathbf{em}_u \) and the decision boundary of the adversarial classifier.  

Directly applying Eq.~(\ref{eq:dynamic-budget}) on the client side introduces significant computational overhead, as it requires computing the gradient at every training iteration.  
Existing studies \cite{elsayed2018large} show  that correct predictions generally correspond to smaller gradient norm, 
suggesting that the embedding \( \mathbf{em}_u \) resides near a local optimum and is far from the decision boundary.
Conversely, misclassified samples tend to lie closer to the boundary, where the gradient norm are higher, indicating higher sensitivity to perturbations, but less informative for unlearning.
Based on this observation, we introduce a simplified binary variant of the perturbation budget SUT defined as: \(\epsilon_u = \epsilon\) if \(\hat{y}_u = y_u\), and \(\epsilon_u = 0\) otherwise,
where \( \hat{y}_u = \arg\max h_\omega(\mathbf{em}_u) \) is the adversarial prediction, and \( \epsilon \) is a constant controlling the gradient reversal strength in the GRL layer.

The binary design of \( \epsilon_u \) greatly simplifies per-user computation while offering clear interpretability: 
if the adversarial classifier misclassifies the label (insufficient discriminatory power), adversarial training is skipped; 
if the prediction is correct (captured label-related information), unlearning is applied accordingly.
SUT effectively balances the trade-off in perturbation budget through user-level adaptive adjustment, eliminating the need for global pretraining.
This design enables more stable adversarial training in federated settings.

\subsection{Dual-Stochastic Variational Autoencoder}
\label{subsec:dsvae}

For \textbf{CH2}, we observe that even if the attribute information can be unlearned from embeddings, the adversarial gradients can still leak sensitive labels~\cite{zhu2019deep}.
To address this challenge, we propose DSVAE, which injects stochasticity to mask sensitive labels in the gradients. 
We motivate its design by comparing how three different model architectures (original, VAE, and DSVAE) behave under DLG-based attacks, starting with a unified theoretical analysis of how DLG reconstructs labels, which generalizes across architectures.

DLG aims to reconstruct the true user label \( y \) by optimizing a dummy label \( y^* \) that closely approximates it.
Since the label \( y \) is inferred 
through the final layers of the model~\cite{zhao2020idlg},  
our analysis specifically focuses on the final layer preceding the softmax output.
Given a class \( i \), we define \( \hat{y}_i' \) as its predicted probability during the reconstruction attack, 
\( \nabla W_i \) as the corresponding observed gradient vector from the final layer, and \( y_i^* \) as optimal dummy label.

%
\begin{theorem} 
\label{theorem:closed_yi}
Given a class \( i \), let \( \hat{y}_i' \) denote the predicted probability during the reconstruction attack,  
\( \nabla W_i \) the observed gradient of the final layer, and \( y_i^* \) the optimal dummy label.  
Then, \( y_i^* \) admits the following closed-form solution:

\begin{equation}
    y_i^* = \hat{y}_i' - \delta_i,
    \label{eq:closed_form_yi_dummy}
\end{equation}
where \( \delta_i = \mathcal{G}(\nabla W_i) \), a function with of the gradient \( \nabla W_i \).
\end{theorem}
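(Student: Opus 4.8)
The plan is to exploit the special algebraic structure that the softmax--cross-entropy pairing imposes on the final-layer gradient, and then read off $y_i^*$ from the first-order optimality condition of the DLG objective of Eq.~(\ref{eq:dlg_optimization}) restricted to that layer. Throughout, let $a'$ denote the activation feeding into the final layer under the dummy input $\mathbf{em}'$, let $z' = W a' + b$ be the corresponding logits, and let $\hat{y}' = \mathrm{softmax}(z')$ be the dummy prediction.

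First I would establish the gradient structure. Using the standard identity for softmax composed with cross-entropy, $\partial \mathcal{L}/\partial z_i' = \hat{y}_i' - y_i'$, together with the chain rule through $z_i' = W_i a' + b_i$, the dummy gradient of the $i$-th weight row factorizes as the rank-one product
\begin{equation}
    \nabla W_i' = (\hat{y}_i' - y_i')\, a'.
    \label{eq:rank-one}
\end{equation}
The crucial point is that $\hat{y}_i'$ depends only on the dummy input (through $a'$) and not on the dummy label, so $y_i'$ enters $\nabla W_i'$ purely through the scalar coefficient $\hat{y}_i' - y_i'$.

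Next I would substitute Eq.~(\ref{eq:rank-one}) into the per-layer objective $\min_{y_i'} \| \nabla W_i' - \nabla W_i \|_2^2$ and impose stationarity in $y_i'$. Since $\partial \nabla W_i'/\partial y_i' = -a'$, the first-order condition reads $\langle (\hat{y}_i' - y_i^*)\, a' - \nabla W_i,\, a' \rangle = 0$, and solving the resulting scalar equation gives the optimal dummy label
\begin{equation}
    y_i^* = \hat{y}_i' - \frac{\langle \nabla W_i,\, a' \rangle}{\| a' \|_2^2}.
    \label{eq:solve}
\end{equation}
Setting $\delta_i = \mathcal{G}(\nabla W_i) := \langle \nabla W_i,\, a' \rangle / \| a' \|_2^2$ recovers exactly Eq.~(\ref{eq:closed_form_yi_dummy}), exhibiting $\delta_i$ as an explicit linear functional of the observed gradient $\nabla W_i$. (At the DLG optimum $\nabla W_i \parallel a'$, so this projection coefficient is in fact the exact scalar factor, and the least-squares and exact-matching routes coincide.)

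The hard part is justifying the rank-one factorization in Eq.~(\ref{eq:rank-one}) and, relatedly, treating the activation $a'$ as a quantity that can be isolated: the whole derivation hinges on the final-layer gradient being a scalar multiple of a single shared activation vector, which the softmax--cross-entropy loss guarantees but which would fail for a generic loss or for a layer with per-class activations. Establishing this cleanly, and noting that at the optimum $a'$ coincides with the true activation so that $\mathcal{G}$ is well defined and label-revealing, is the crux; the subsequent projection producing Eq.~(\ref{eq:solve}) is then routine. This closed form is precisely what motivates DSVAE: by injecting stochasticity into $a'$, and hence into $\nabla W_i$, the map $\mathcal{G}$ no longer yields a stable $\delta_i$, which is what breaks the label leakage.
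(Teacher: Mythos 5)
Your proof is correct and follows essentially the same route as the paper's: the softmax--cross-entropy identity makes the dummy gradient linear in \( y_i' \) through the scalar \( \hat{y}_i' - y_i' \), and the first-order condition of the convex quadratic gradient-matching objective yields the stated closed form. The only difference is that you specialize \( \partial s_i'/\partial W_i' \) to the linear-layer activation \( a' \) (which is really the MLP case of Corollary~\ref{corollary:closed_form_ori}), whereas the paper keeps this factor abstract so that the identical projection formula instantiates to the MLP, VAE, and DSVAE corollaries without re-derivation.
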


\begin{proof}
    The proof can be found in Appendix B.1.
\end{proof}

As shown in Eq.~(\ref{eq:closed_form_yi_dummy}), \( y_i^* \) is depended on two part:
i) \( \hat{y}_i' \)
approximates the client-side prediction \( \hat{y}_i \) from the reconstructed embedding \( \mathbf{em}' \); and 
ii) a correction term \( \delta_i \) that reflects the preference encoded by the gradient \( \nabla W_i \) for label \( i \).
%
By leveraging both \( \hat{y}_i' \) and \( \delta_i \),  
DLG can infer the user's attribute.
Thus, the key to preventing DLG from reconstructing labels lies in perturbing these two components.  
In the following, we analyze the behavior of \( \hat{y}_i' \) and \( \delta_i \) under different model architectures.  

\subsubsection{Original MLP}
Following prior work~\cite{ganhor2022unlearning}, we use an MLP as the adversarial classifier.
In the original MLP, the linear transformation is deterministic without any form of stochasticity.  
As a result, DLG can easily reconstruct the user label information by exploiting both \( \hat{y}_i' \) and \( \delta_i \).

\begin{corollary}
    \label{corollary:closed_form_ori}
    Let \( \mathbf{h}' \) denote the input to the final linear layer of the MLP during the reconstruction attack.
    Then, the correction term \( \delta_i \) is given by:
    \begin{equation}
        \delta_i = \frac{(\mathbf{h}')^\top \nabla W_i}{\|\mathbf{h}'\|^2}.
        \label{eq:delta_i_ori}
    \end{equation}
\end{corollary}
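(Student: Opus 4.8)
The plan is to start from the closed-form decomposition $y_i^* = \hat{y}_i' - \delta_i$ established in Theorem~\ref{theorem:closed_yi} and to compute $\delta_i = \mathcal{G}(\nabla W_i)$ explicitly for the deterministic MLP. The crucial observation is that the gradient of the final linear layer has a rank-one structure: under the softmax--cross-entropy loss, the gradient of the loss with respect to the logit of class $i$ equals $\hat{y}_i - y_i$, so the gradient with respect to the weight row feeding class $i$ is the outer product $\nabla W_i = (\hat{y}_i - y_i)\,\mathbf{h}$, where $\mathbf{h}$ is the input to that layer. First I would record this identity, since it is the single fact that makes the closed form tractable.

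Next I would transcribe the same structure into the reconstruction attack. With the reconstructed input $\mathbf{h}'$ and the dummy label $y_i'$, the dummy gradient row is $\nabla W_i' = (\hat{y}_i' - y_i')\,\mathbf{h}'$. Because this vector is constrained to lie along the fixed direction $\mathbf{h}'$, matching it to the observed $\nabla W_i$ in the DLG objective of Eq.~(\ref{eq:dlg_optimization}) reduces, for the $i$-th row, to the one-dimensional least-squares problem $\min_{y_i'} \bigl\| (\hat{y}_i' - y_i')\,\mathbf{h}' - \nabla W_i \bigr\|^2$.

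I would then solve this by setting the derivative with respect to $y_i'$ to zero, equivalently by projecting $\nabla W_i$ onto $\mathrm{span}(\mathbf{h}')$. The stationarity condition $\bigl[(\hat{y}_i' - y_i^*)\,\mathbf{h}' - \nabla W_i\bigr]^\top \mathbf{h}' = 0$ rearranges to $(\hat{y}_i' - y_i^*) = (\mathbf{h}')^\top \nabla W_i / \|\mathbf{h}'\|^2$. Comparing with $y_i^* = \hat{y}_i' - \delta_i$ yields $\delta_i = (\mathbf{h}')^\top \nabla W_i / \|\mathbf{h}'\|^2$, which is exactly Eq.~(\ref{eq:delta_i_ori}).

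The one point needing care --- and the main obstacle --- is the treatment of $\hat{y}_i'$ during the label optimization: the projection formula is clean only if $\hat{y}_i'$ is held fixed while minimizing over $y_i'$. I would justify this by noting that, in the DLG formulation, $\hat{y}_i'$ is the softmax output computed from the reconstructed embedding $\mathbf{em}'$ (hence from $\mathbf{h}'$) and does not depend on the dummy label $y_i'$ directly, so the $i$-th row objective is genuinely quadratic in $y_i'$ and its stationary point is the unique global minimizer whenever $\mathbf{h}' \neq \mathbf{0}$.
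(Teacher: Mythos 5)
Your proposal is correct and takes essentially the same route as the paper: the paper's proof simply instantiates the general formula from Theorem~\ref{theorem:closed_yi} with \( \partial s_i'/\partial W_i' = \mathbf{h}' \) for the linear layer \( s_i' = W_i'^\top \mathbf{h}' \), which is exactly the rank-one least-squares projection you carry out explicitly. Your added remark that \( \hat{y}_i' \) is held fixed during the minimization over \( y_i' \) is a legitimate point of care that the paper leaves implicit, but it does not change the argument.
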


\begin{proof}
The proof can be found in Appendix B.2.
\end{proof}

For \( \hat{y}_i' \), DLG can accurately reconstruct the user embedding from the original MLP~\cite{zhu2019deep}, effectively approximating the client-side prediction \( \hat{y}_i \) using the reconstructed embedding \( \mathbf{em}' \).  
For \( \delta_i \), DLG can accurately capture the preference encoded in the gradient \( \nabla W_i \) through Eq.~(\ref{eq:delta_i_ori}),  
where the sign of \( \nabla W_i \) differs when \( i \) corresponds to the true label~\cite{zhao2020idlg}, directly revealing the user label \( y_i \).

\subsubsection{MLP with VAE}
While the Variational Autoencoder (VAE) architecture injects stochasticity during training, this primarily affects \( \hat{y}_i' \),  
leaving the correction term \( \delta_i \) intact and still exploitable by DLG for reconstructing user labels.

VAE assumes that each input is associated with a latent variable \( \mathbf{u} \sim \mathcal{N}(\mathbf{0}, \mathbf{I}) \), which is learned by maximizing the evidence lower bound.
To enable backpropagation, it applies the reparameterization trick:
\[
\mathbf{u} = \boldsymbol{\mu} + \boldsymbol{\sigma} \odot \boldsymbol{\epsilon}, \quad \boldsymbol{\epsilon} \sim \mathcal{N}(\mathbf{0}, \mathbf{I}),
\]
where \( \boldsymbol{\mu} \) and \( \boldsymbol{\sigma} \) are 
tuned to capture the latent distribution.
Note that VAE injects stochasticity solely via the \( \boldsymbol{\sigma} \) path through the sampling of \( \boldsymbol{\epsilon} \),  
whereas the \( \boldsymbol{\mu} \) path remains unaffected.  
As a result, DLG can still reconstruct user labels by leveraging the deterministic \( \boldsymbol{\mu} \) path.

\begin{corollary}
    \label{corollary:closed_form_vae}
    Let \( \mathbf{z}' \) denote the input to the VAE during the reconstruction attack,
    and \( \nabla W_i^\mu \) denote the observed gradient vector associated with the \( i \)-th logit along the \( \boldsymbol{\mu} \) path.
    Then, the correction term \( \delta_i \) derived from \( \nabla W_i^\mu \) is given by:
    \begin{equation}
        \delta_i = \frac{(\mathbf{z}')^\top \nabla W_i^\mu}{\|\mathbf{z}'\|^2}
        \label{eq:closed_form_vae}
    \end{equation}
\end{corollary}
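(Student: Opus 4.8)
The plan is to reduce the VAE $\boldsymbol{\mu}$ path to the deterministic linear case already handled in Corollary~\ref{corollary:closed_form_ori}, by showing that the reparameterization noise leaves the $\boldsymbol{\mu}$-path gradient structurally unchanged. First I would fix a single forward pass with a sampled realization $\boldsymbol{\epsilon}$ and write the $i$-th logit as $\text{logit}_i = \mu_i + \sigma_i \epsilon_i$, where $\mu_i = (W_i^\mu)^\top \mathbf{z}'$ is the mean produced from the VAE input $\mathbf{z}'$ along the deterministic $\boldsymbol{\mu}$ path. The key observation is that $\mathbf{u} = \boldsymbol{\mu} + \boldsymbol{\sigma} \odot \boldsymbol{\epsilon}$ is additive in $\boldsymbol{\mu}$, so $\partial \text{logit}_i / \partial \mu_i = 1$ for every realization of $\boldsymbol{\epsilon}$; the stochasticity enters only through the $\boldsymbol{\sigma}$ branch and is independent of $\boldsymbol{\mu}$.

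Next I would apply the chain rule to the observed gradient along the $\boldsymbol{\mu}$ path. Since $\mu_i$ depends linearly on $W_i^\mu$ through $\mathbf{z}'$, we obtain $\nabla W_i^\mu = (\partial \mathcal{L} / \partial \mu_i)\, \mathbf{z}' = (\partial \mathcal{L} / \partial \text{logit}_i)\, \mathbf{z}'$, using $\partial \text{logit}_i / \partial \mu_i = 1$. Invoking the softmax--cross-entropy identity already used in Theorem~\ref{theorem:closed_yi} and Corollary~\ref{corollary:closed_form_ori}, the scalar prefactor is exactly $\partial \mathcal{L} / \partial \text{logit}_i = \hat{y}_i' - y_i^*$, which is precisely the correction term $\delta_i$.

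Finally I would isolate this scalar by projecting onto $\mathbf{z}'$: taking the inner product of $\nabla W_i^\mu = \delta_i\, \mathbf{z}'$ with $\mathbf{z}'$ and dividing by $\|\mathbf{z}'\|^2$ yields $\delta_i = (\mathbf{z}')^\top \nabla W_i^\mu / \|\mathbf{z}'\|^2$, the claimed closed form in Eq.~(\ref{eq:closed_form_vae}). This mirrors the normalization step of Corollary~\ref{corollary:closed_form_ori} under the substitution $\mathbf{h}' \mapsto \mathbf{z}'$ and $\nabla W_i \mapsto \nabla W_i^\mu$.

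The main obstacle I anticipate is justifying rigorously that the stochastic $\boldsymbol{\sigma}$ branch does not contaminate the $\boldsymbol{\mu}$-path gradient, i.e., that the additive, noise-independent structure of the reparameterization guarantees $\partial \text{logit}_i / \partial \mu_i = 1$ and hence preserves the proportionality $\nabla W_i^\mu \propto \mathbf{z}'$ exactly. This is precisely the property that lets DLG recover $\delta_i$ despite the injected noise, and it is the conceptual point motivating the dual-stochastic design of DSVAE.
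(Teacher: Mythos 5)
Your proof is correct and follows essentially the same route as the paper: in both, the key observation is that the reparameterization noise enters only the $\boldsymbol{\sigma}$ branch, so $\partial s_i'/\partial (W_i^{\mu})' = \mathbf{z}'$, and Theorem~\ref{theorem:closed_yi}'s formula then specializes exactly as in Corollary~\ref{corollary:closed_form_ori} under the substitution $\mathbf{h}' \mapsto \mathbf{z}'$, $\nabla W_i \mapsto \nabla W_i^{\mu}$. The only cosmetic difference is that you phrase the final step as if the observed gradient were exactly $\delta_i\,\mathbf{z}'$, whereas the paper obtains $\delta_i$ as the least-squares minimizer of $\| \nabla W_i^{\mu} - (\hat{y}_i' - y_i')\,\mathbf{z}' \|_2^2$; the resulting projection $(\mathbf{z}')^\top \nabla W_i^{\mu} / \|\mathbf{z}'\|^2$ is the same either way.
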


\begin{proof}
    The proof can be found in Appendix B.3.
\end{proof}
For \( \hat{y}_i' \), VAE hinders accurate reconstruction of user embeddings through stochastic perturbations~\cite{scheliga2022precode}, thereby disrupting the direct mapping from \( \hat{y}_i' \) to the client-side prediction \( \hat{y}_i \).  
In contrast, for \( \delta_i \), since VAE does not introduce stochasticity along the \( \boldsymbol{\mu} \) path, 
DLG can still accurately capture the preference encoded in the gradient \( \nabla W_i^\mu \) through Eq.~(\ref{eq:closed_form_vae}), enabling label reconstruction through this deterministic path.

\begin{table*}[t]
  \centering
  \scriptsize
  \setlength{\tabcolsep}{3.1pt}
  \renewcommand{\arraystretch}{0.5}
  {
  \begin{tabular}{ccc cccc cccc cccc}
  \toprule
  \multirow{2}{*}{Dataset} & \multirow{2}{*}{Attributes} & \multirow{2}{*}{Method} 
  & \multicolumn{4}{c}{FedNCF} 
  & \multicolumn{4}{c}{FedVAE} 
  & \multicolumn{4}{c}{FedRAP} \\
  \cmidrule(r){4-7} \cmidrule(r){8-11} \cmidrule(r){12-15}
   & & & HR@10 $\uparrow$ & NDCG@10 $\uparrow$ & F1 $\downarrow$ & BAcc $\downarrow$
         & HR@10 $\uparrow$ & NDCG@10 $\uparrow$ & F1 $\downarrow$ & BAcc $\downarrow$
         & HR@10 $\uparrow$ & NDCG@10 $\uparrow$ & F1 $\downarrow$ & BAcc $\downarrow$ \\
  \midrule
  
  \multirow{6}{*}{ML-100K}
    & \multirow{3}{*}{Gender}
      & Original & 0.6392 & 0.3634 & 0.5982 & 0.6068 & 0.6703 & 0.3910 & 0.5909 & 0.6097 & 0.4414 & 0.2385 & 0.5789 & 0.5853 \\
    & & APM      & 0.5495 & 0.3083 & 0.5668 & 0.5675 & 0.5183 & 0.2938 & 0.5710 & 0.5785 & 0.4121 & 0.2249 & \textbf{0.5411} & \textbf{0.5485} \\
    & & \method{}     & \textbf{0.6154} & \textbf{0.3447} & \textbf{0.4992} & \textbf{0.5026} 
                 & \textbf{0.6667} & \textbf{0.3730} & \textbf{0.4113} & \textbf{0.5295} 
                 & \textbf{0.4359} & \textbf{0.2473} & 0.5490 & 0.5508 \\
    \cmidrule{2-15}
    
    & \multirow{3}{*}{Age}
      & Original & 0.6392 & 0.3634 & 0.4590 & 0.4739 & 0.6703 & 0.3910 & 0.3956 & 0.4179 & 0.4414 & 0.2385 & 0.4185 & 0.4250 \\
    & & APM      & 0.5495 & 0.3083 & 0.3915 & 0.3945 & 0.5183 & 0.2938 & 0.3362 & 0.3843 & 0.4121 & 0.2249 & 0.3793 & 0.3844 \\
    & & \method{}     & \textbf{0.5916} & \textbf{0.3247} & \textbf{0.3556} & \textbf{0.3623} 
                 & \textbf{0.6392} & \textbf{0.3626} & \textbf{0.2105} & \textbf{0.3431} 
                 & \textbf{0.4322} & \textbf{0.2452} & \textbf{0.3601} & \textbf{0.3706} \\
  \midrule
  
  \multirow{6}{*}{ML-1M}
    & \multirow{3}{*}{Gender}
      & Original & 0.6662 & 0.3897 & 0.7024 & 0.7032 & 0.6896 & 0.4169 & 0.7189 & 0.7204 & 0.4365 & 0.2381 & 0.7069 & 0.7089 \\
    & & APM      & 0.5530 & 0.3079 & 0.6162 & 0.6179 & 0.5366 & 0.3005 & 0.6654 & 0.6666 & 0.4114 & 0.2268 & 0.6575 & 0.6621 \\
    & & \method{}     & \textbf{0.6232} & \textbf{0.3519} & \textbf{0.5176} & \textbf{0.5136} 
                 & \textbf{0.6697} & \textbf{0.3994} & \textbf{0.4191} & \textbf{0.5536} 
                 & \textbf{0.4429} & \textbf{0.2437} & \textbf{0.6285} & \textbf{0.6283} \\

    \cmidrule{2-15}
  
    & \multirow{3}{*}{Age}
      & Original & 0.6662 & 0.3897 & 0.5896 & 0.5926 & 0.6896 & 0.4169 & 0.6186 & 0.6181 & 0.4365 & 0.2381 & 0.5488 & 0.5535 \\
    & & APM      & 0.5530 & 0.3079 & 0.4832 & 0.4850 & 0.5366 & 0.3005 & 0.5140 & 0.5169 & 0.4114 & 0.2268 & 0.5109 & 0.5159 \\
    & & \method{}     & \textbf{0.6360} & \textbf{0.3595} & \textbf{0.3641} & \textbf{0.3669} 
                 & \textbf{0.6779} & \textbf{0.4017} & \textbf{0.1755} & \textbf{0.3347} 
                 & \textbf{0.4447} & \textbf{0.2458} & \textbf{0.4012} & \textbf{0.4000} \\

  \midrule
  
  \multirow{6}{*}{LastFM}
    & \multirow{3}{*}{Gender}
      & Original & 0.7158 & 0.4411 & 0.6907 & 0.6926 & 0.7407 & 0.4648 & 0.6806 & 0.6895 & 0.4763 & 0.2697 & 0.6909 & 0.6912 \\
    & & APM      & 0.5744 & 0.3266 & 0.5950 & 0.6010 & 0.5818 & 0.3285 & 0.6463 & 0.6560 & 0.4351 & 0.2516 & \textbf{0.6079} & \textbf{0.6079} \\
    & & \method{}     & \textbf{0.6475} & \textbf{0.3867} & \textbf{0.4410} & \textbf{0.5047} 
                 & \textbf{0.7279} & \textbf{0.4566} & \textbf{0.3713} & \textbf{0.5222} 
                 & \textbf{0.4783} & \textbf{0.2689} & 0.6087 & 0.6088 \\

    \cmidrule{2-15}
  
    & \multirow{3}{*}{Age}
      & Original & 0.7158 & 0.4411 & 0.4775 & 0.4813 & 0.7407 & 0.4648 & 0.5329 & 0.5391 & 0.4763 & 0.2697 & 0.5098 & 0.5165 \\
    & & APM      & 0.5744 & 0.3266 & 0.4174 & 0.4189 & 0.5818 & 0.3285 & 0.4677 & 0.4682 & 0.4351 & 0.2516 & 0.4401 & 0.4405 \\
    & & \method{}     & \textbf{0.6584} & \textbf{0.3932} & \textbf{0.3649} & \textbf{0.3685} 
                 & \textbf{0.7358} & \textbf{0.4610} & \textbf{0.2033} & \textbf{0.3556} 
                 & \textbf{0.4794} & \textbf{0.2709} & \textbf{0.4301} & \textbf{0.4321} \\

  \bottomrule
  \end{tabular}
  }
\caption{Performance (HR@10, NDCG@10) and privacy leakage (F1, BAcc) under different methods across datasets and attributes. Bold indicates the best result (excluding Original). All results are averaged over three independent runs.
Due to the space limit, we report the results of HR@$k$ and NDCG@$k$ in Appendix C, where \( k \in \{5, 15, 20\} \).}
  \label{tab:fedrec_exp}
\end{table*}

\subsubsection{MLP with DSVAE}
Our proposed DSVAE enhances VAE by injecting extra stochasticity into the \( \boldsymbol{\mu} \) path, introducing perturbations to both \( \hat{y}_i' \) and \( \delta_i \), which jointly block DLG from accurately reconstructing user labels.
To achieve this, DSVAE injects stochasticity directly into \( \boldsymbol{\mu} \) by redefining the latent variable as:

\begin{equation}
    \mathbf{u} = \boldsymbol{\mu} \odot \boldsymbol{\epsilon}_1' + \boldsymbol{\sigma} \odot \boldsymbol{\epsilon}_2', \quad 
    \boldsymbol{\epsilon}_1' \sim \mathcal{N}(\mathbf{1}, \boldsymbol{\lambda}), \quad 
    \boldsymbol{\epsilon}_2' \sim \mathcal{N}(\mathbf{0}, \mathbf{I}),
\end{equation}
where \( \boldsymbol{\lambda} \) is a tunable hyperparameter that controls the strength of stochasticity.

\begin{corollary}
    \label{corollary:closed_form_dsvae}
The correction term \( \delta_i \) under stochastic perturbation derived from \( \nabla W_i^\mu \) is given by:
\begin{equation}
    \delta_i = \frac{(\mathbf{z}' \odot \boldsymbol{\epsilon}_1')^\top \nabla W_i^\mu}{\| \mathbf{z}' \odot \boldsymbol{\epsilon}_1' \|^2}.
    \label{eq:closed_form_dsv}
\end{equation}
\end{corollary}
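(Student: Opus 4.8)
The plan is to reduce the statement to an instance of Theorem~\ref{theorem:closed_yi}. That theorem already expresses the optimal dummy label as $y_i^* = \hat{y}_i' - \delta_i$ with $\delta_i$ a function of the observed gradient, so the only quantity left to determine is the explicit form of $\delta_i$ for the DSVAE. I would therefore reuse the gradient-matching template behind Corollary~\ref{corollary:closed_form_vae} essentially verbatim, altering a single architecture-specific ingredient: the Jacobian of the latent code with respect to $\boldsymbol{\mu}$ along the $\boldsymbol{\mu}$ path.

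First I would restrict attention to the $\boldsymbol{\mu}$ path and write the $i$-th logit as $u_i = \mu_i\,\epsilon_{1,i}' + \sigma_i\,\epsilon_{2,i}'$ with $\mu_i = (W_i^\mu)^\top \mathbf{z}$. Applying the chain rule through the softmax cross-entropy loss gives
\begin{equation}
\nabla W_i^\mu = \frac{\partial \mathcal{L}}{\partial u_i}\,\frac{\partial u_i}{\partial \mu_i}\,\mathbf{z} = (\hat{y}_i - y_i)\,\epsilon_{1,i}'\,\mathbf{z}.
\end{equation}
The decisive difference from Corollary~\ref{corollary:closed_form_vae} is the extra multiplicative factor $\epsilon_{1,i}'$: in the VAE case the Jacobian $\partial u_i/\partial \mu_i$ equals $1$, whereas here it equals the random scalar $\epsilon_{1,i}'$ coming from $\boldsymbol{\epsilon}_1' \sim \mathcal{N}(\mathbf{1},\boldsymbol{\lambda})$. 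Next I would carry out the same projection step as in the VAE proof, but onto the noise-modulated direction $\mathbf{z}' \odot \boldsymbol{\epsilon}_1'$ rather than onto $\mathbf{z}'$. Setting the dummy-side gradient equal to the observed $\nabla W_i^\mu$ and projecting, the factors of $\boldsymbol{\epsilon}_1'$ cancel between numerator and denominator, isolating $\hat{y}_i' - y_i'$ and yielding
\begin{equation}
\delta_i = \frac{(\mathbf{z}' \odot \boldsymbol{\epsilon}_1')^\top \nabla W_i^\mu}{\|\mathbf{z}' \odot \boldsymbol{\epsilon}_1'\|^2}.
\end{equation}
Substituting this into $y_i^* = \hat{y}_i' - \delta_i$ closes the argument.

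The main obstacle I expect is not the algebra but the careful handling of the random factor $\boldsymbol{\epsilon}_1'$, and it splits into two delicate points. First, the realization of $\boldsymbol{\epsilon}_1'$ entering the \emph{observed} (real) gradient differs in general from any realization available to the attacker, so I must be explicit that the clean cancellation producing the formula above requires projecting with the \emph{true} $\boldsymbol{\epsilon}_1'$; second, I should verify that the normalizing constant is exactly $\|\mathbf{z}' \odot \boldsymbol{\epsilon}_1'\|^2$, matching the scaling that appears in the numerator, since this exact matching is what makes the cancellation hold at the convergence point $\mathbf{z}' \to \mathbf{z}$.

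These two observations are in fact the crux of the privacy claim rather than mere bookkeeping. Unlike Corollary~\ref{corollary:closed_form_vae}, where projecting onto the deterministic input $\mathbf{z}'$ recovers $\delta_i$ cleanly, here the correct $\delta_i$ is inaccessible without the private noise $\boldsymbol{\epsilon}_1'$: an attacker who reuses the VAE-style projection onto $\mathbf{z}'$ recovers only the corrupted quantity $(\hat{y}_i - y_i)\,\epsilon_{1,i}'$, scaled by an unknown random factor, and therefore cannot reconstruct the label. I would conclude by emphasizing this contrast, since it is the formal content that distinguishes DSVAE from both the original MLP and the plain VAE.
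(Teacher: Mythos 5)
Your proposal is correct and follows essentially the same route as the paper's Appendix~B.4: instantiate the gradient-matching template of Theorem~\ref{theorem:closed_yi} with the DSVAE Jacobian \( \partial s_i'/\partial (W_i^\mu)' = \boldsymbol{\epsilon}_1' \odot \mathbf{z}' \) and read off \( \delta_i \) as the projection coefficient onto \( \mathbf{z}' \odot \boldsymbol{\epsilon}_1' \). The only caveat is a notational wobble --- you write the noise as a scalar \( \epsilon_{1,i}' \) multiplying the logit \( \mu_i \) (the main-text parameterization) but then project onto the vector \( \mathbf{z}' \odot \boldsymbol{\epsilon}_1' \) (the appendix parameterization, with noise elementwise over the latent dimension); the paper's proof uses the latter throughout, and your final formula and the attacker-inaccessibility discussion match it.
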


\begin{proof}
    The proof can be found in Appendix B.4.
\end{proof}

As \( y_i \), \( \hat{y}_i \), \( \mathbf{z} \), and \( \boldsymbol{\epsilon}_1 \) are generated locally on the client during training, they are independent of server-side data.
Given that the gradient computes as \( \nabla W_i^\mu = (\hat{y}_i - y_i)(\mathbf{z} \odot \boldsymbol{\epsilon}_1) \), substituting into Eq.~(\ref{eq:closed_form_dsv}) yields:

\begin{equation}
    \delta_i = (\boldsymbol{\epsilon}_1^\top \boldsymbol{\epsilon}_1') \odot \frac{(\hat{y}_i - y_i)(\mathbf{z}^\top \mathbf{z}')}{\| \mathbf{z}' \odot \boldsymbol{\epsilon}_1' \|^2}.
    \label{eq:closed_form_var}
\end{equation}

For \( \hat{y}_i' \), building upon the VAE framework, DSVAE similarly disrupts the direct mapping to the client-side prediction \( \hat{y}_i \).
For \( \delta_i \), the injected noise into the \( \boldsymbol{\mu} \) path introduces a multiplicative term \( \boldsymbol{\epsilon}_1^\top \boldsymbol{\epsilon}_1' \), which effectively perturbs the information embedded in \( \nabla W_i^\mu \).
As a result, DSVAE introduces dual stochasticity.  
This design effectively prevents DLG from accurately reconstructing user labels.
In addition, for iDLG~\cite{zhao2020idlg}, the introduction of \( \boldsymbol{\epsilon}_1 \) invalidates its label inference mechanism based on the sign of gradients.

\section{Experiments}\label{sec:exp}
\subsection{Experimental Setup}
\subsubsection{Datasets}

We conduct experiments on three widely-used real-world datasets, each containing user-item interactions and user attributes (e.g., age and gender):

\begin{itemize}[leftmargin=*]\setlength{\itemsep}{-\itemsep}
    \item \textbf{ML-100K}: 100K movie ratings from 1,000 users on 1,700 movies, with user demographics including gender, age, and occupation.
    \item \textbf{ML-1M}: 1M ratings from 6,040 users on 3,706 movies, with similar user attribute information as ML-100K.
    \item \textbf{LastFM-360K}: User-artist interactions and user profiles (gender, age, country) from a music streaming platform.
\end{itemize}

\subsubsection{Recommendation Models}

We evaluate our proposed framework on two representative 
and one personalized FedRecs models:

\begin{itemize}[leftmargin=*]\setlength{\itemsep}{-\itemsep}
    \item \textbf{FedNCF}: A federated version of Neural Collaborative Filtering~\cite{perifanis2022federated}.
    \item \textbf{FedVAE}: A federated adaptation of MultVAE that incorporates an adaptive learning rate~\cite{polato2021federated}.
    \item \textbf{FedRAP}: A personalized FedRecs model combining global item embeddings with user-specific residuals~\cite{li2023federated}.
\end{itemize}

\subsubsection{Unlearning Methods}

As group-level methods~\cite{hu2024user, wu2025aegis} cannot apply to user-level FedRecs, we compare \method{} with the user-level baseline and the standard training strategy.
\begin{itemize}[leftmargin=*]\setlength{\itemsep}{-\itemsep}
    \item \textbf{Original}: The model without attribute unlearning.
    \item \textbf{APM}: A DP-based method that perturbs model parameters during training~\cite{zhang2023comprehensive}.
\end{itemize}

\subsubsection{Attack Setting}

Following
~\cite{zhang2023comprehensive}, we assume an honest-but-curious server that attempts to infer user attributes from gradients, along with an external adversary that exploits user embeddings for attribute inference. 

\subsubsection{Evaluation Metrics}

We use micro-averaged F1 score and Balanced Accuracy (BAcc) to evaluate attribute unlearning effectiveness.  
Gradient unlearning effectiveness 
is evaluated using accuracy.  
Recommendation performance is evaluated using Hit Ratio (HR) and Normalized Discounted Cumulative Gain (NDCG), under the leave-one-out evaluation protocol. We truncate the ranked list at 5, 10, 15, and 20.
More details of experimental setup, including data pre-processing, attack settings, training parameter settings, and hardware information, are provided in Appendix A.

\subsection{Results and Discussions}
\subsubsection{Attribute Unlearning Performance}

We report the F1 score and BAcc in Table~\ref{tab:fedrec_exp}.
On FedNCF and FedVAE, our proposed \method{} achieves an average BAcc reduction of 26.42\%, while APM achieves only 11.5\%, consistently outperforming baselines in both F1 and BAcc across all datasets and attributes.
For FedRAP, the average BAcc reduction is 14.09\% for \method{} and 9.24\% for APM.
FedRAP shows relatively weak unlearning performance, largely because its residual-based user embedding is hard to modify. 
To improve efficiency, we aggregate the residual into a single dimension during adversarial training, which may further limit its attribute unlearning ability.

\subsubsection{Recommendation Performance}
 As shown in Table~\ref{tab:fedrec_exp}, \method{} and APM reduce NDCG@10 by 4.51\% and 20.05\%, respectively, on average.  
Across all settings, \method{} consistently outperforms APM in both NDCG and HR, demonstrating better utility preservation while effectively unlearning sensitive attributes.
Interestingly, for FedRAP, \method{} even leads to a slight improvement in recommendation performance, which may be attributed to the unique structure of its user embedding.

\subsubsection{SUT Analysis}
To assess the effect of SUT on training stability, we show the recommendation and unlearning performance under different adversarial strategies in Fig.~\ref{fig:sut}. 
\textit{Global} applies adversarial training throughout all epochs, while \textit{Pretrain} pre-trains the adversarial model for multiple epochs (10 for FedRAP, 50 for FedVAE, and 100 for FedNCF) before unlearning.
As shown in Fig.~\ref{fig:sut}, SUT yields the best recommendation performance, while \textit{Global} performs the worst, indicating that SUT's adaptive perturbation prevents excessive disruption to user embeddings.
For unlearning performance, SUT and \textit{Global} perform similarly, but \textit{Pretrain} performs worst due to limited adversarial model generalization under stochastic sampling.
Overall, SUT better balances recommendation and unlearning performance.

\subsubsection{DSVAE Analysis}
To validate the effectiveness of DSVAE, we present the gradient unlearning performance,
the component analysis of Eq.~(\ref{eq:closed_form_yi_dummy}), 
and the impact of the stochasticity coefficient \( \lambda \) on model performance. 
\begin{itemize}[leftmargin=*]\setlength{\itemsep}{-\itemsep}
    \item \textbf{Gradient Unlearning.} As shown in Fig.~\ref{fig:dsvae_1}, the original MLP is vulnerable to gradient attacks, which accurately infer user attributes, achieving up to 90\% accuracy in FedVAE.
While VAE offers slight resistance to the attack, the adversary can still achieve up to 89\% accuracy. 
In contrast, DSVAE significantly mitigates the attack across all three models through dual stochasticity, bringing the accuracy down to 58\% in FedVAE.
    \item \textbf{Component Analysis.} As shown in Fig.~\ref{fig:dsvae_2}, the original \( \hat{y}_i' \) remains slightly above random guessing, reflecting that it approximates the client-side prediction.
Both VAE and DSVAE are able to reduce \( \hat{y}_i' \) to the level of random guessing.  
However, VAE has limited effect on \( \delta_i \), resulting in a gradient's preference similar to the original MLP and leading to insufficient perturbation of \( y_i^* \).
In contrast, DSVAE significantly suppresses \( \delta_i \), masking the gradient's preference, resulting in effective perturbation of \( y_i^* \) and robust defense against gradient-based reconstruction.
    \item \textbf{Stochasticity Coefficient.} As shown in Fig.~\ref{fig:dsvae_3}, increasing \( \lambda \) introduces more stochasticity into the adversarial model, which weakens its generalization ability to attribute unlearning. 
This results in worse unlearning but better recommendation performance.
Meanwhile, the increased stochasticity also improves gradient unlearning performance, gradually approaching random guessing.
\end{itemize}

\subsubsection{Overhead Analysis}
SUT performs adversarial training only when the adversarial prediction is correct, reducing gradient computation. It is client-side and parameter-free, adding no memory or communication overhead. 
DSVAE replaces the last layer with two projections, doubling its parameters and causing a linear increase in computational, memory, and communication overhead. 
The combination of SUT and DSVAE slightly reduces the overall computational cost. We empirically evaluate our method and report the results in Appendix D.

\begin{figure}[t]
  \centering
  \includegraphics[width=0.8\linewidth]{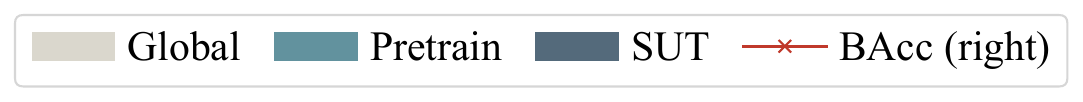} \\
  \includegraphics[width=\linewidth]{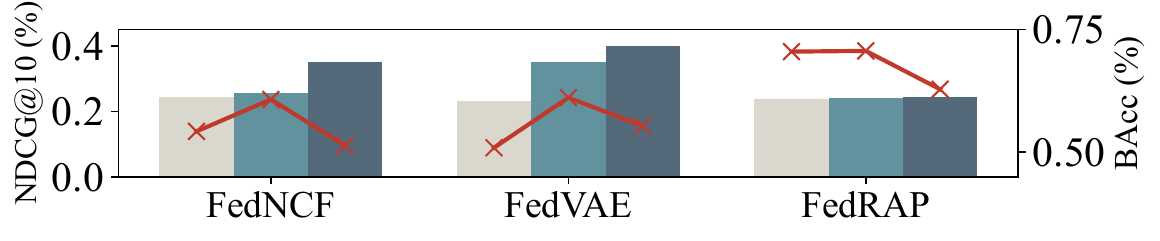}
  \caption{Ablation analysis of SUT. Recommendation (NDCG@10 $\uparrow$) and unlearning (BAcc $\downarrow$) performance under different adversarial training strategies on ML-1M (gender).}
  \label{fig:sut}
\end{figure}

\begin{figure}[t]
  \centering
  \hspace{0.9cm}\includegraphics[width=0.8\linewidth]{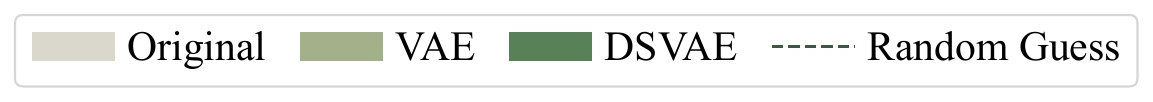} \\
  \vspace{-6pt}
  \subfigure[Gradient unlearning performance]{\label{fig:dsvae_1}
    \vspace{-10pt}\includegraphics[width=\linewidth]{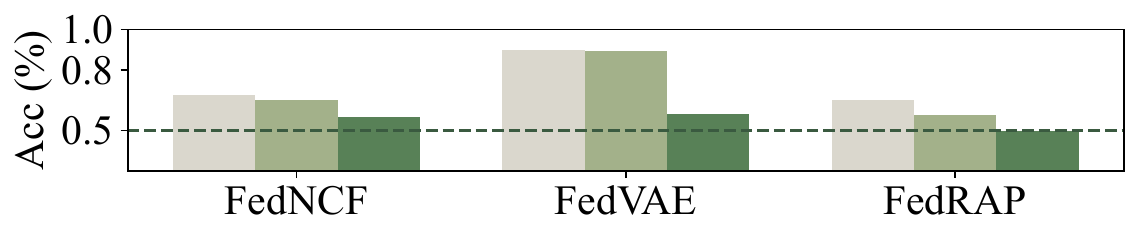}
  }
  \subfigure[Component Analysis of Reconstruction]{\label{fig:dsvae_2}
    \includegraphics[width=\linewidth]{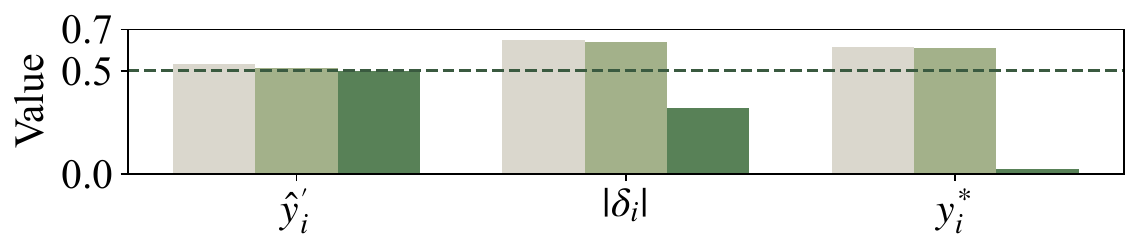}
  }
  \subfigure[Impact of the stochasticity coefficient \( \lambda \)]{\label{fig:dsvae_3}
  \includegraphics[width=\linewidth]{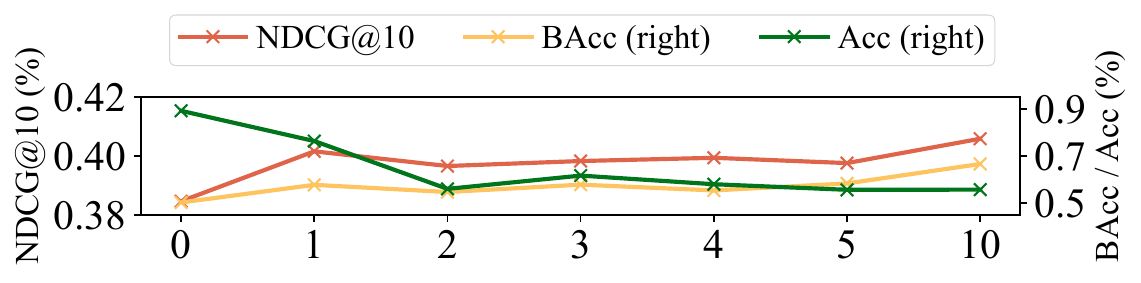}
  }
  \caption{Ablation analysis of DSVAE, 
  conducted on ML-1M (gender). 
  (a) Gradient unlearning performance across three models. 
  (b) Component reconstruction analysis in FedVAE. 
  (c) Effect of the stochasticity coefficient \( \lambda \) on recommendation (NDCG@10 \( \uparrow \)), attribute unlearning (BAcc \( \downarrow \)), and gradient unlearning (Acc \( \downarrow \)) performance in FedVAE.}  
    \label{fig:dsvae}
\end{figure}

\section{Conclusion}
In this paper, we study attribute unlearning in user-level FedRecs, aiming to remove sensitive attribute information from user embeddings while maintaining recommendation utility. 
User-level FedRecs are more challenging than group-level settings due to decentralized optimization and highly personalized data distributions.
We identify two key challenges for adversarial training-based unlearning in this setting: i) training instability caused by user heterogeneity, and ii) gradient-based leakage of sensitive information. 
To tackle these challenges, we propose \method{}, which incorporates two core components: an adaptive adversarial training strategy (SUT) that dynamically adjusts the perturbation budget based on local optimization signals, and a dual-stochastic variational autoencoder (DSVAE) that effectively masks attribute information in the gradient. 
Extensive experiments on three real-world datasets and multiple representative FedRecs models demonstrate that \method{} achieves significantly better unlearning effectiveness and recommendation performance compared to existing baselines. 
Our findings show that adaptive adjustment is crucial for stable adversarial training in user-level FedRecs, and that label-side defenses are essential against gradient-based attacks.

\appendix

\section{Experimental Details}\label{appendix_experimental_details}
\subsection{Dataset pre-processing}
For ML-100K and ML-1M, we retain only users who have interacted with at least 5 items, and items that have received interactions from at least 5 users. 
 For LastFM, we retain users with at least 23 interactions and items with at least 51 user interactions.
After filtering, we balance the number of users across gender groups (male and female).
To evaluate recommendation performance, we reserve each user's most recent interaction (based on timestamp) for testing.
The age attribute is discretized into three categories: under 27, 27 to 38, and over 38 for ML-100K; under 25, 25 to 31, and over 31 for ML-1M; and under 21, 21 to 26, and over 26 for KuaiSAR. 
The gender attribute is restricted to male and female categories.

\subsection{Attack settings}
Following prior work~\cite{zhang2023comprehensive}, we assume an honest-but-curious server, along with an external adversary equipped with the following capabilities:
(i) the adversary has access to user embeddings $\mathbf{e}_m$ at any iteration $t$ during training;  %
(ii) the adversary knows the sensitive attributes of a small subset of users (e.g., 20\%), who may collaborate with the server or have shared their information voluntarily.
(iii) the server stores the latest uploaded gradients from users and launches a gradient-based attack at a specific iteration.

For attribute inference attacks, we implement the adversary as a MLP consisting of one hidden layer with 100 units and a softmax output layer. We use 20\% of the users as the training set and evaluate the attack performance on the remaining 80\%.
Specifically for FedRAP, we design an ensemble attack that targets the user-specific residuals. Given that the residual matrix $\mathbf{e}_m \in \mathbb{R}^{M \times N}$ (where $M$ is the number of items and $N$ is the embedding dimension), 
we treat each feature dimension independently and train a separate MLP classifier for each column. The final prediction is obtained by averaging the outputs from all $N$ classifiers.

For gradient-based attacks, we adopt the standard DLG method~\cite{zhu2019deep}, where the server is assumed to have access to both the model parameters and user gradients. 
To enable this, the server stores each client's latest gradient and the corresponding model state. 
After 100 global epochs, the DLG attack is launched using the recorded information.

\subsection{Training parameters}

For all FedRecs, we set the number of local epochs to 1 and sample 10\% of users for training in each global round. 
For FedNCF, we set the latent dimension, local learning rate, and number of global epochs to 32, 0.1, and 800, respectively.  
For FedVAE, they are set to 200, 0.01, and 400.  
For FedRAP, they are set to 32, 0.01, and 200.
All model parameters are initialized using Xavier initialization~\cite{glorot2010understanding}. 
In our proposed \method{}, we use a two-layer MLP as the adversarial model, with a hidden layer dimension of 100 and a softmax output layer.
The learning rate of the adversarial model is set to 0.1 for FedNCF and FedRAP, 0.001 for FedVAE.
The gradient scale in the GRL layer is set to 400, and the stochasticity coefficient \( \lambda \) in the DSVAE module is set to 4.

\subsection{Hardware information}
All models and algorithms are implemented using Python 3.10.0 and Pytorch 2.6.0. 
The experiments are conducted on a server running Ubuntu 18.04, equipped with 125GB of RAM and an NVIDIA GeForce RTX 4090 GPU.

\section{Proof of Theorems}
\subsection{Proof of Theorem \ref{theorem:closed_yi}}
\label{sec:proof_yi}
We provide the proof of Theorem~\ref{theorem:closed_yi}, beginning by introducing several intermediate variables.
\begin{proof}
Given a class \( i \),
let \( W_i' \) denote the dummy weight vector from the final layer;  
\( s_i' = f(W_i') \), the dummy logit derived from \( W_i' \);  
\( \hat{y}_i' = \mathrm{Softmax}(s_i') \), the predicted dummy probability.    
The dummy cross-entropy loss is defined as:
\[
\mathcal{L}_{\text{CE}}' = - \sum_{j=1}^{n} y_j' \log \hat{y}_j',
\]
We reconstruct \( y_i' \) by minimizing the squared gradient mismatch.
\begin{equation*}
    \begin{aligned}
        \min_{y_i'} \left\| \nabla W_i - \nabla W_i' \right\|_2^2
        &= \min_{y_i'} \left\| \nabla W_i 
        - \frac{\partial \mathcal{L}_{\text{CE}}'}{\partial W_i'} \right\|_2^2 \\
        &= \min_{y_i'} \left\| \nabla W_i 
        - \frac{\partial \mathcal{L}_{\text{CE}}'}{\partial s_i'} \cdot \frac{\partial s_i'}{\partial W_i'}  \right\|_2^2 \\
        &= \min_{y_i'} \left\| \nabla W_i 
        - (\hat{y}_i' - y_i') \cdot \frac{\partial s_i'}{\partial W_i'} \right\|_2^2.
    \end{aligned}
\end{equation*}
Solving the first-order condition of this convex quadratic yields:
\[
y_i^* = \hat{y}_i' - \frac{
\left( \dfrac{\partial s_i'}{\partial W_i'} \right)^\top \nabla W_i
}{
\left\| \dfrac{\partial s_i'}{\partial W_i'} \right\|^2
}.
\]
We define the correction term \( \delta_i \) as a function of \( \nabla W_i \), yielding the closed-form reconstruction:
\[
y_i^* = \hat{y}_i' - \delta_i, \quad
\delta_i := \mathcal{G}(\nabla W_i) := \frac{
\left( \dfrac{\partial s_i'}{\partial W_i'} \right)^\top \nabla W_i
}{
\left\| \dfrac{\partial s_i'}{\partial W_i'} \right\|^2
}.
\]
\end{proof}

\begin{table*}[t]
    \centering
    \caption{Additional recommendation performance (HR@$k$ and NDCG@$k$ for \( k \in \{5, 15, 20\} \)) under different methods across datasets and attributes.}
    \label{tab:addition_exp}
    \resizebox{\linewidth}{!}{
    \begin{tabular}{ccc cccccc | cccccc | cccccc}
    \toprule
    \multirow{2}{*}{Dataset} & \multirow{2}{*}{Attributes} & \multirow{2}{*}{Method} 
    & \multicolumn{6}{c}{FedNCF} 
    & \multicolumn{6}{c}{FedVAE} 
    & \multicolumn{6}{c}{FedRAP} \\
    \cmidrule(r){4-9} \cmidrule(r){10-15} \cmidrule(r){16-21}
     & & & HR@5 & NDCG@5 & HR@15 & NDCG@15 & HR@20 & NDCG@20
     & HR@5 & NDCG@5 & HR@15 & NDCG@15 & HR@20 & NDCG@20
     & HR@5 & NDCG@5 & HR@15 & NDCG@15 & HR@20 & NDCG@20 \\
    \midrule
    
    \multirow{6}{*}{ML-100K}
      & \multirow{3}{*}{Gender}
        & Original & 0.4396 & 0.2992 & 0.7491 & 0.3924 & 0.8260 & 0.4106
                  & 0.4615 & 0.3240 & 0.7875 & 0.4219 & 0.8352 & 0.4331
                  & 0.2912 & 0.1907 & 0.5183 & 0.2590 & 0.5604 & 0.2690 \\
      & & APM      & 0.3773 & 0.2521 & 0.6722 & 0.3406 & 0.7546 & 0.3601
                  & 0.3718 & 0.2464 & 0.6429 & 0.3268 & 0.6941 & 0.3390
                  & 0.2784 & 0.1825 & 0.4725 & 0.2407 & 0.5147 & 0.2508 \\
      & & \method{}     & 0.4322 & 0.2852 & 0.7179 & 0.3720 & 0.7930 & 0.3898
                  & 0.4432 & 0.3022 & 0.7527 & 0.3959 & 0.8242 & 0.4128
                  & 0.3040 & 0.2126 & 0.5110 & 0.2745 & 0.5842 & 0.2918 \\
      \cmidrule{2-21}
      
      & \multirow{3}{*}{Age}
      & Original & 0.4396 & 0.2992 & 0.7491 & 0.3924 & 0.8260 & 0.4106
                 & 0.4615 & 0.3240 & 0.7875 & 0.4219 & 0.8352 & 0.4331
                 & 0.2912 & 0.1907 & 0.5183 & 0.2590 & 0.5604 & 0.2690 \\
      & & APM      & 0.3773 & 0.2521 & 0.6722 & 0.3406 & 0.7546 & 0.3601
                 & 0.3718 & 0.2464 & 0.6429 & 0.3268 & 0.6941 & 0.3390
                 & 0.2784 & 0.1825 & 0.4725 & 0.2407 & 0.5147 & 0.2508 \\
      & & \method{}     & 0.3846 & 0.2584 & 0.6923 & 0.3511 & 0.7674 & 0.3689
                 & 0.4322 & 0.2968 & 0.7326 & 0.3872 & 0.8040 & 0.4041
                 & 0.2930 & 0.2001 & 0.5293 & 0.2711 & 0.5934 & 0.2863 \\
    \midrule
    
    \multirow{6}{*}{ML-1M}
      & \multirow{3}{*}{Gender}
      & Original & 0.4865 & 0.3317 & 0.7657 & 0.4160 & 0.8233 & 0.4296
                 & 0.5234 & 0.3630 & 0.7803 & 0.4410 & 0.8283 & 0.4523
                 & 0.2993 & 0.1940 & 0.5184 & 0.2598 & 0.5784 & 0.2740 \\
      & & APM      & 0.3812 & 0.2522 & 0.6653 & 0.3377 & 0.7414 & 0.3556
                 & 0.3754 & 0.2482 & 0.6448 & 0.3291 & 0.7241 & 0.3478
                 & 0.2891 & 0.1872 & 0.4845 & 0.2462 & 0.5339 & 0.2578 \\
      & & \method{}     & 0.4441 & 0.2944 & 0.7209 & 0.3778 & 0.7917 & 0.3945
                 & 0.5044 & 0.3438 & 0.7575 & 0.4210 & 0.8207 & 0.4359
                 & 0.3087 & 0.2007 & 0.5252 & 0.2655 & 0.5711 & 0.2763 \\
  
      \cmidrule{2-21}
    
      & \multirow{3}{*}{Age}
      & Original & 0.4865 & 0.3317 & 0.7657 & 0.4160 & 0.8233 & 0.4296
                 & 0.5234 & 0.3630 & 0.7803 & 0.4410 & 0.8283 & 0.4523
                 & 0.2993 & 0.1940 & 0.5184 & 0.2598 & 0.5784 & 0.2740 \\
      & & APM      & 0.3812 & 0.2522 & 0.6653 & 0.3377 & 0.7414 & 0.3556
                 & 0.3754 & 0.2482 & 0.6448 & 0.3291 & 0.7241 & 0.3478
                 & 0.2891 & 0.1872 & 0.4845 & 0.2462 & 0.5339 & 0.2578 \\
      & & \method{}     & 0.4500 & 0.2995 & 0.7323 & 0.3849 & 0.7978 & 0.4003
                 & 0.5038 & 0.3452 & 0.7633 & 0.4243 & 0.8248 & 0.4388
                 & 0.3116 & 0.2031 & 0.5366 & 0.2701 & 0.6173 & 0.2891 \\
  
    \midrule
    
    \multirow{6}{*}{LastFM}
      & \multirow{3}{*}{Gender}
      & Original & 0.5450 & 0.3860 & 0.7925 & 0.4615 & 0.8496 & 0.4750
                 & 0.5773 & 0.4118 & 0.8206 & 0.4860 & 0.8672 & 0.4970
                 & 0.3360 & 0.2245 & 0.5533 & 0.2901 & 0.6012 & 0.3015 \\
      & & APM      & 0.3999 & 0.2701 & 0.6678 & 0.3512 & 0.7431 & 0.3691
                 & 0.4078 & 0.2722 & 0.6874 & 0.3565 & 0.7564 & 0.3729
                 & 0.3250 & 0.2158 & 0.4941 & 0.2672 & 0.5397 & 0.2780 \\
      & & \method{}     & 0.4778 & 0.3319 & 0.7327 & 0.4093 & 0.7927 & 0.4235
                 & 0.5647 & 0.4036 & 0.8095 & 0.4782 & 0.8582 & 0.4897
                 & 0.3378 & 0.2235 & 0.5553 & 0.2894 & 0.6073 & 0.3017 \\
  
      \cmidrule{2-21}
    
      & \multirow{3}{*}{Age}
      & Original & 0.5450 & 0.3860 & 0.7925 & 0.4615 & 0.8496 & 0.4750
                 & 0.5773 & 0.4118 & 0.8206 & 0.4860 & 0.8672 & 0.4970
                 & 0.3360 & 0.2245 & 0.5533 & 0.2901 & 0.6012 & 0.3015 \\
      & & APM      & 0.3999 & 0.2701 & 0.6678 & 0.3512 & 0.7431 & 0.3691
                 & 0.4078 & 0.2722 & 0.6874 & 0.3565 & 0.7564 & 0.3729
                 & 0.3250 & 0.2158 & 0.4941 & 0.2672 & 0.5397 & 0.2780 \\
      & & \method{}     & 0.4884 & 0.3384 & 0.7474 & 0.4168 & 0.8059 & 0.4306
                 & 0.5700 & 0.4072 & 0.8159 & 0.4822 & 0.8640 & 0.4935
                 & 0.3419 & 0.2265 & 0.5613 & 0.2926 & 0.6131 & 0.3048 \\
  
    \bottomrule
    \end{tabular}
    }
\end{table*}

\begin{table}[t]
    \centering
    \caption{Computation overhead comparison  across different components. 
    ``Fwd'' and ``Bwd'' respectively denote the forward and backward propagation time.}
    \label{tab:time_overhead}
    \resizebox{0.65\linewidth}{!}{
    \begin{tabular}{lcccc}
        \toprule
        Time / ms & Adv & +SUT & +DSVAE & \method{} \\
        \midrule
        Fwd & 2.953 & 2.876 & 3.033 & 2.964 \\
        Bwd & 5.536 & 5.112 & 5.640 & 5.117 \\
        Overall & 8.489 & 7.988 & 8.673 & 8.081 \\
        \bottomrule
    \end{tabular}
    }
\end{table}

\subsection{Proof of Corollary \ref{corollary:closed_form_ori}}
We provide a proof of Corollary~\ref{corollary:closed_form_ori} as a case of Theorem~\ref{theorem:closed_yi} applied to the original MLP.
\begin{proof}
\label{sec:proof_ori}
In a standard MLP, \( s_i' = W_i'^\top \mathbf{h}' \), where \( \mathbf{h}' \) is the dummy input to the final layer. Since \( \frac{\partial s_i'}{\partial W_i'} = \mathbf{h}' \), we obtain the correction term:
\[
\delta_i = \frac{(\mathbf{h}')^\top \nabla W_i}{\|\mathbf{h}'\|^2}.
\]
\end{proof}

\subsection{Proof of Corollary \ref{corollary:closed_form_vae}}
\label{sec:proof_vae}
We provide a proof of Corollary~\ref{corollary:closed_form_vae} as a case of Theorem~\ref{theorem:closed_yi} applied to the MLP with VAE.
\begin{proof}
In the MLP with VAE, the logit \( s_i' \) is computed as:
\[
s_i' = \left( \left(W_i^{\mu}\right)'^\top + \left(W_i^{\sigma}\right)'^\top \odot \boldsymbol{\epsilon}' \right) \mathbf{z}',
\]
where \( \mathbf{z}' \) is the VAE input, and \( \boldsymbol{\epsilon}' \) is Gaussian sampled.
\( \left(W_i^{\mu}\right)' \) and \( \left(W_i^{\sigma}\right)' \) are weights of the \( \boldsymbol{\mu} \) and \( \boldsymbol{\sigma} \) paths, respectively.
The dummy label \( y_i' \) is derived from \( \nabla W_i^{\mu} \) along the \( \boldsymbol{\mu} \) path as:
\[
\begin{aligned}
\min_{y_i'} \left\| \nabla W_i^{\mu} - \nabla \left(W_i^{\mu}\right)' \right\|_2^2 
&= \min_{y_i'} \left\| \nabla W_i^{\mu} - (\hat{y}_i' - y_i') \cdot \frac{\partial s_i'}{\partial \left(W_i^{\mu}\right)'} \right\|_2^2 \\
&= \min_{y_i'} \left\| \nabla W_i^{\mu} - (\hat{y}_i' - y_i') \cdot \mathbf{z}' \right\|_2^2.
\end{aligned}
\]
This yields the closed-form:
\[
y_i^* = \hat{y}_i' - \delta_i, \quad \text{where} \quad \delta_i = \frac{(\mathbf{z}')^\top \nabla W_i^{\mu}}{\|\mathbf{z}'\|^2}.
\]
\end{proof}

\subsection{Proof of Corollary \ref{corollary:closed_form_dsvae}}
\label{sec:proof_dsvae}
We provide a proof of Corollary~\ref{corollary:closed_form_dsvae} as a case of Theorem~\ref{theorem:closed_yi} applied to the MLP with DSVAE.
\begin{proof}
In the MLP with DSVAE, the logit \( s_i' \) is computed as:
\[
s_i' = \left( \left(W_i^{\mu}\right)'^\top \odot \boldsymbol{\epsilon}_1' + \left(W_i^{\sigma}\right)'^\top \odot \boldsymbol{\epsilon}_2' \right) \mathbf{z}',
\]
where \( \boldsymbol{\epsilon}_1' \) and \( \boldsymbol{\epsilon}_2' \) are independent Gaussian samples for the \( \boldsymbol{\mu} \) and \( \boldsymbol{\sigma} \) paths.
Dummy label \( y_i' \) is inferred along the \( \boldsymbol{\mu} \) path based on \( \nabla W_i^{\mu} \), following:
\[
\begin{aligned}
\min_{y_i'} \left\| \nabla W_i^{\mu} - \nabla \left(W_i^{\mu}\right)' \right\|_2^2 
&= \min_{y_i'} \left\| \nabla W_i^{\mu} - (\hat{y}_i' - y_i') \cdot \frac{\partial s_i'}{\partial \left(W_i^{\mu}\right)'} \right\|_2^2 \\
&= \min_{y_i'} \left\| \nabla W_i^{\mu} - (\hat{y}_i' - y_i') \cdot \left( \boldsymbol{\epsilon}_1' \odot \mathbf{z}' \right) \right\|_2^2. 
\end{aligned}
\]
This yields the closed-form:
\[
y_i^* = \hat{y}_i' - \delta_i, \quad \text{where} \quad \delta_i = \frac{ \left( \boldsymbol{\epsilon}_1' \odot \mathbf{z}' \right)^\top \nabla W_i^{\mu} }{ \left\| \boldsymbol{\epsilon}_1' \odot \mathbf{z}' \right\|^2 }.
\]
\end{proof}

\section{Additional Experimental Results}
\label{appendix_experimental_results}
We provide additional experimental results on recommendation performance (HR@$k$ and NDCG@$k$, for \( k \in \{5, 15, 20\} \)) under different methods across datasets and attributes, as shown in Table~\ref{tab:addition_exp}.

\section{Overhead Analysis}\label{appendix_overhead}
Table~\ref{tab:time_overhead} compares the computational overhead introduced by different components, evaluated on ML-1M with FedVAE. SUT significantly reduces the backward propagation time, while DSVAE slightly increases computation overhead. Putting them together, \method{} still achieves a 4.8\% reduction compared with the original Adv.

\section*{Acknowledgments}
This work was supported in part by the National Natural Science Foundation of China under Grants (No.~62402148 and No.~62402418), the Zhejiang Province's 2025 ``Leading Goose + X'' Science and Technology Plan under Grant No.~2025C02034, the Key R\&D Program of Ningbo under No.~2024Z115, the Fundamental Research Funds for the Central Universities, and Ant Group Research Fund.

\bibliography{aaai2026}

\end{document}